%% This is file `elsarticle-template-1a-num.tex',
%%
%% Copyright 2009 Elsevier Ltd
%%
%% This file is part of the 'Elsarticle Bundle'.
%% ---------------------------------------------
%%
%% It may be distributed under the conditions of the LaTeX Project Public
%% License, either version 1.2 of this license or (at your option) any
%% later version.  The latest version of this license is in
%%    http://www.latex-project.org/lppl.txt
%% and version 1.2 or later is part of all distributions of LaTeX
%% version 1999/12/01 or later.
%%
%% The list of all files belonging to the 'Elsarticle Bundle' is
%% given in the file `manifest.txt'.
%%
%% Template article for Elsevier's document class `elsarticle'
%% with numbered style bibliographic references
%%
%% $Id: elsarticle-template-1a-num.tex 151 2009-10-08 05:18:25Z rishi $
%% $URL: http://lenova.river-valley.com/svn/elsbst/trunk/elsarticle-template-1a-num.tex $
%%
\documentclass[final,3p, times]{elsarticle}

%% Use the option review to obtain double line spacing
%% \documentclass[preprint,review,12pt]{elsarticle}

%% Use the options 1p,twocolumn; 3p; 3p,twocolumn; 5p; or 5p,twocolumn
%% for a journal layout:
%% \documentclass[final,1p,times]{elsarticle}
%% \documentclass[final,1p,times,twocolumn]{elsarticle}
%% \documentclass[final,3p,times]{elsarticle}
%% \documentclass[final,3p,times,twocolumn]{elsarticle}
%% \documentclass[final,5p,times]{elsarticle}
%% \documentclass[final,5p,times,twocolumn]{elsarticle}

\usepackage{algorithmic}
\usepackage{algorithm}
\usepackage{epsfig}
\usepackage{amssymb,amsmath,amsthm}
\usepackage{tabularx}

\newtheorem{theorem}{Theorem}
\newtheorem{definition}{Definition}
\newtheorem{remark}{Remark}

\newcommand{\ds}{\displaystyle}
\newcommand{\e}{\varepsilon}

%% if you use PostScript figures in your article
%% use the graphics package for simple commands
%% \usepackage{graphics}
%% or use the graphicx package for more complicated commands
%% \usepackage{graphicx}
%% or use the epsfig package if you prefer to use the old commands
%% \usepackage{epsfig}

%% The amssymb package provides various useful mathematical symbols
\usepackage{amssymb}
%% The amsthm package provides extended theorem environments
%% \usepackage{amsthm}

%% The lineno packages adds line numbers. Start line numbering with
%% \begin{linenumbers}, end it with \end{linenumbers}. Or switch it on
%% for the whole article with \linenumbers after \end{frontmatter}.
%% \usepackage{lineno}

%% natbib.sty is loaded by default. However, natbib options can be
%% provided with \biboptions{...} command. Following options are
%% valid:

%%   round  -  round parentheses are used (default)
%%   square -  square brackets are used   [option]
%%   curly  -  curly braces are used      {option}
%%   angle  -  angle brackets are used    <option>
%%   semicolon  -  multiple citations separated by semi-colon
%%   colon  - same as semicolon, an earlier confusion
%%   comma  -  separated by comma
%%   numbers-  selects numerical citations
%%   super  -  numerical citations as superscripts
%%   sort   -  sorts multiple citations according to order in ref. list
%%   sort&compress   -  like sort, but also compresses numerical citations
%%   compress - compresses without sorting
%%
%% \biboptions{comma,round}

% \biboptions{}

\journal{Journal of Computer and System Sciences}

\begin{document}

\begin{frontmatter}

%% Title, authors and addresses

%% use the tnoteref command within \title for footnotes;
%% use the tnotetext command for the associated footnote;
%% use the fnref command within \author or \address for footnotes;
%% use the fntext command for the associated footnote;
%% use the corref command within \author for corresponding author footnotes;
%% use the cortext command for the associated footnote;
%% use the ead command for the email address,
%% and the form \ead[url] for the home page:
%%
%% \title{Title\tnoteref{label1}}
%% \tnotetext[label1]{}
%% \author{Name\corref{cor1}\fnref{label2}}
%% \ead{email address}
%% \ead[url]{home page}
%% \fntext[label2]{}
%% \cortext[cor1]{}
%% \address{Address\fnref{label3}}
%% \fntext[label3]{}

\title{Theoretical Analysis and Tuning of Decentralized Probabilistic Auto-Scaling}

%% use optional labels to link authors explicitly to addresses:
%% \author[label1,label2]{<author name>}
%% \address[label1]{<address>}
%% \address[label2]{<address>}

\author{Bogdan Alexandru Caprarescu, Eva Kaslik, Dana Petcu}

\address{Faculty of Mathematics and Computer Science, West University of Timi\c{s}oara, Timi\c{s}oara 300223, Rom\^{a}nia\\
Research Institute e-Austria Timi\c{s}oara, Timi\c{s}oara 300223, Rom\^{a}nia}

\begin{abstract}
A major impediment towards the industrial adoption of decentralized distributed systems comes from the difficulty to theoretically prove that these systems exhibit the required behavior. In this paper, we use probability theory to analyze a decentralized auto-scaling algorithm in which each node probabilistically decides to scale in or out. We prove that, in the context of dynamic workloads, the average load of the system is maintained within a variation interval with a given probability, provided that the number of nodes and the variation interval length are higher than certain bounds. The paper also proposes numerical algorithms for approximating these minimum bounds.
\end{abstract}

\begin{keyword}
auto-scaling \sep decentralized computing \sep probability analysis \sep probabilistic algorithms
%% keywords here, in the form: keyword \sep keyword

%% MSC codes here, in the form: \MSC code \sep code
%% or \MSC[2008] code \sep code (2000 is the default)

\end{keyword}

\end{frontmatter}	

%%
%% Start line numbering here if you want
%%
% \linenumbers

%% main text
\section{Introduction}
\label{sec:intro}

Scalability has always been a critical non-functional requirement for parallel and distributed systems that receive a variable workload, but with the advent of cloud computing, scalability is no longer bounded by the amount of physical resources that were initially allocated to the system. The on-demand provisioning of virtualized computing nodes in the cloud allows providers of Internet services to rent only the amount of resources needed to serve the current workload and then scale in and out to cope with the request rate variations \cite{armbrust2010}.

The dynamic and unpredictable nature of the workload experienced by many Web applications, which may suddenly become popular or loose most of their users in favor of other more trendy applications, requires autonomic scaling mechanisms. Therefore, the autonomic provisioning of virtualized resources has emerged as a rich research direction leading to the proposal of many auto-scaling techniques. However, most of these techniques are executed by a central manager which, despite being able to apply advanced optimization algorithms, acts as a scalability bottleneck and introduces a single point of failure (see, for example, \cite{almeida2010, ghanbari2012, sharma2011}). The scalability limitations of centralized management are very well described by Meng et al. in \cite{meng2010} based on their experience with VMware.

P2P technologies have proved their effectiveness in building Internet applications that are both massively scalable and fault tolerant \cite{keong2005}. With the advent of large cloud data centers offering on demand access to computing resources, we argue that P2P overlay networks represent a viable solution for building elastic service systems that are capable to adapt their resource consumption to the dynamic workload. However, the shift from an Internet-based P2P environment to a cloud-based P2P environment requires a change in the way the system scales. Thus, in Internet-based P2P systems the peers join or leave the system at their will, while in a cloud environment the system itself should decide to scale in or out.

To overcome the limitations of centralized management, we proposed a decentralized probabilistic auto-scaling (DEPAS) algorithm in which each node decides to add a new node, remove itself, or do nothing in a probabilistic manner. The nodes self-organize to form an unstructured overlay network where, through gossiping, each node is able to estimate the average load of the system. Therefore, facing a variable workload, the system is capable to resize itself with the aim of keeping the average load close to a given threshold, called desired load. The difficulty of decentralized systems to maintain a fixed value for a global property was overcome by allowing the average load to vary in an interval (called load variation interval) centered in the desired load. The presentation of the DEPAS algorithm was the subject of another paper \cite{caprarescu2011} where we showed through extensive simulations of up to ten thousand nodes that the number of allocated nodes is close to the optimal one while the algorithm is highly scalable and robust. In this paper, we use probability theory to analyze the behavior of the algorithm and provide the potential customer of DEPAS with methods for tuning its parameters in such a way that a certain level of trust can be guaranteed.

Actually, one of the main challenges towards the widespread use of decentralized systems in the industry comes from the difficulty to guarantee that the system will exhibit the expected behavior. Moreover, the achievement of a global property usually depends on the proper configuration of some system parameters. Finding such a proper configuration represents the tricky part of designing a decentralized system. For the DEPAS algorithm the global property to be maintained is the average load of the system and the main configuration parameter is the length of the load variation interval. By keeping the average load within the variation interval the system avoids oscillations (i.e., additions and removals mixed in a row) and the total number of nodes stays close to the optimal value.

Due to the randomized and decentralized nature of our algorithm it is not possible to guarantee that a global property is achieved in absolute terms. Therefore, we adopt a probabilistic approach and prove that the algorithm keeps the average load within the variation interval with a certain probability, called correctness probability. Intuitively, given the length of the load variation interval, the correctness probability increases with the increase in the number of nodes. Reversely, for a certain number of nodes, the correctness probability grows with the increase of the load variation interval. Thus, in both scenarios, we are interested if there are some minimum bounds so that a minimum correctness probability can be guaranteed for any configuration in which the number of nodes and the interval length are higher than the minimum bounds, respectively.

To address the above problem, in this paper, the DEPAS algorithm is modeled as a set of Bernoulli trials. In this way, using probability theory, we formally prove that: (i) given the desired load, the load variation interval, and a minimum correctness probability, there is a minimum number of nodes so that the correctness probability is higher than the given threshold for any actual load and for any number of nodes higher than the given minimum; (ii) given the desired load, the number of nodes and a minimum correctness probability, there is a minimum interval length so that the correctness probability is higher than the minimum threshold for any actual load and for any interval length higher than the minimum. Additionally, from the formal analysis of DEPAS, we derive numerical algorithms for computing upper bounds on the minimum thresholds (based on the Chernoff-Hoeffding bounds) and compare these upper bounds with an estimation of the real minimums (based on the binomial formula for Bernoulli trials). The algorithms for both scenarios can be used to tune DEPAS at design time. Moreover, the algorithm corresponding to the second scenario can be also used at runtime to dynamically adjust the variation interval length of each node according to the continuously changing size of the system.

The remaining of this paper is organized as follows. Section \ref{sec:pas} describes the DEPAS algorithm and formulates the problem to be solved. The problem is formally analyzed and a proof is given in Section \ref{sec:analysis}.  The algorithms for tuning DEPAS are described in Section \ref{sec:alg}, while Section \ref{sec:tests} discusses the tuning algorithms in the light of some experimental results. Related work is shown in Section \ref{sec:related}. Finally, Section \ref{sec:conclusion} concludes the paper. 

\section{Decentralized probabilistic auto-scaling (DEPAS)}
\label{sec:pas}

The DEPAS algorithm was described in \cite{caprarescu2011} as part of a middleware for deploying massively scalable services in a cloud infrastructure. In this section, the algorithm is introduced with an emphasis on the problem of finding a subset of the parameter configurations for which a minimum correctness probability can be guaranteed.

We assume a system composed of a set of computing nodes. A node can be either a physical machine or a virtual machine allocated from a cloud provider on a utility basis. The nodes are homogenous in the sense that they have the same hardware configuration and run the same piece of software comprising both the functional service and the components of our middleware: overlay management, load balancing, and auto-scaling. The overlay management algorithm organizes the nodes into an unstructured overlay network where each node has a fixed degree and a low standard deviation in-degree. This is an adapted version of the gossip-based protocol developed by Jelasity et al. \cite{jelasity2007}. For load balancing we use a combined approach: a DNS is employed to assign the address of a node to each client while an internal decentralized load balancing protocol (such as the dimension exchange protocol \cite{dinitto2008}) moves requests between neighbors in order to equalize the load across nodes.

In the following, the term node will be used to denote both the machine and the service instance running on that machine. Thus, the main parameters of a node are the capacity and the load. The capacity of a node is the maximum number of requests per second that can be processed by the service deployed on that node and is derived through benchmarking. As the nodes are homogenous, they have the same capacity. The load of a node is computed at a given moment in time, as the percentage of the average number of requests per second that were scheduled on that node over a certain timeframe with respect to the capacity of the node. Then, in a homogenous system, the average load of the system is simply the arithmetic average of the loads of all nodes. Note that in the case when the workload received by the system overcomes its capacity, the average load is supra-unitary.

\begin{table}
	 \caption{DEPAS notations}
\begin{center}
\begin{tabularx}{\textwidth}{ | p{0.7cm} | X | }
	\hline
	$n$ & Number of nodes of the system \\
	$L_0$ & Desired load threshold  (percent with respect to node capacity) \\
	$L$ & Average load of the system (percent with respect to node capacity) \\
	$\delta$ & Defines the allowed load variation \\
	$pi$ & Probability indicator: used to compute the node probability, $p$ \\
	$p$ & Node-level probability used to make node addition/removal decisions\\
	$P_0$ & Minimum correctness probability threshold \\
	\hline
\end{tabularx}
\end{center}
   
    \label{table:notations}
\end{table}

Under these considerations, the goal of the auto-scaling algorithm is to maintain the average load of the system, noted with $L$, within a given interval, $\left(L_0 - \delta, L_0 + \delta \right)$, where $L_0$ is called desired load and $\delta$ defines the allowed variation of the load (see Table \ref{table:notations} for a complete list of notations). To achieve this goal each node can execute two types of actions: remove itself and allocate one or more other nodes.

The DEPAS algorithm is shown in listing \ref{alg:pas}. It is periodically run by each node and begins by retrieving an estimation of the average load of the system. Note that the average load is not computed at this time, but just retrieved from the component running the average protocol. Although there are gossip-based solutions for computing the average load of the system at each node \cite{jelasity2005}, in our experiments \cite{caprarescu2011}, for simplicity and higher scalability, we approximated the average load of the system by the average load of the node and its neighbors. If the load is less than or equal to $L_0 - \delta$, then the node computes a probability indicator using Eq. (\ref{eq:prob-indicator}) and, because the indicator is sub-unitary in this case, the node uses it as the probability to remove itself. Otherwise, if the load is higher than or equal to $L_0 + \delta$, then the probability indicator is computed using the same equation. In this situation, the indicator can be supra-unitary where its integer part represents the number of nodes to be added for sure, while its fractional part is used as the probability to add another node. Note that the \textit{random()} function generates a uniformly distributed random decimal number between 0 and 1.

\begin{algorithm}
\caption{DEPAS}
\label{alg:pas}
\begin{algorithmic}
\WHILE{$true$}
\STATE  $wait(timeframe)$
\STATE $L \gets getAverageSystemLoad()$
\IF {$L \le L_0 - \delta$}	
       \STATE $pi \gets computeProbabilityIndicator(L, L_0)$
       \STATE $p \gets pi$
	 \IF {$p < random()$}
        	\STATE $removeSelf()$
	\ENDIF
\ELSE
	\IF {$L \ge L_0 + \delta$}
      	\STATE $pi \gets computeProbabilityIndicator(L, L_0)$
		\STATE $m \gets \lfloor pi \rfloor$
		\STATE $p \gets \left\{pi\right\}$
		\IF {$p < random()$}
        		\STATE $m \gets m + 1$
		\ENDIF
		\STATE $addNodes(m)$
	\ENDIF
\ENDIF
\ENDWHILE
\end{algorithmic}
\end{algorithm}

\begin{equation}
\label{eq:prob-indicator}
pi =  \frac{\left| L - L_0 \right|}{L_0}
\end{equation}

The desired load, $L_0$, is subject of a tradeoff. On one hand, a high value reduces the number of nodes, but, in case of a sudden workload increase, leads to a severe degradation of the system performance before the system has the chance to allocate new nodes. On the other hand, a low desired load increases the tolerance of the system to sudden traffic bursts at the cost of allocating many nodes. As the desired load has a high impact on both performance and cost, the customer of DEPAS is in the best position to set its value.

However, while the meaning and impact of the desired load is straightforward for the customer, this is not the case when it comes to the load variation threshold, $\delta$. Actually, even though the customer should not need to care about $\delta$, they are definitely interested in two aspects of the algorithm, namely correctness and accuracy, that are directly impacted by $\delta$. By correctness we understand the ability of the system to make a right provisioning decision in the first place, thus avoiding oscillations (i.e., additions and removals mixed in a row). Due to the randomized nature of DEPAS it is not possible to evaluate its correctness in absolute terms (i.e., correct or incorrect). Therefore, we introduce the notion of correctness probability and allow the customer to define the correctness of the algorithm by specifying a minimum threshold of the correctness probability (see Definition \ref{def:correctness-probability}). Then, the accuracy of a correct algorithm is a measure of how close the allocated number of nodes is to the optimal one. Therefore, the accuracy of DEPAS is directly influenced by $\delta$: the lower $\delta$ is, the higher the accuracy is. In this paper, we provide a formal analysis of the link between correctness probability, load variation threshold, and number of nodes. Our goal is to provide the customer with an analytical method for configuring DEPAS so that it meets given correctness requirements with an as high as possible accuracy.

\begin{definition}
\label{def:correctness-probability}
Let $L_0$ and $\delta$ be the desired load and the load variation threshold of a DEPAS instance, respectively. Then, the correctness probability is the probability that, after DEPAS has been run by each node, the new average load is in the interval $\left(L_0 - \delta, L_0 + \delta\right)$. The correctness probability is noted with $P(L_0 - \delta < L < L_0 + \delta)$.
Consequently, we say that an instance of DEPAS is correct if its correctness probability is higher than or equal to a given threshold, denoted by $P_0$.
\end{definition}

More concretely, we consider two usage scenarios of the DEPAS algorithm. In both scenarios, the customer specifies the minimum correctness probability, $P_0$. In the first scenario -- called \textit{Min n} -- the customer is able to predict the minimum workload of the system and implicitly the minimum number of nodes and they are interested in the values of $\delta$ for which the correctness probability is higher than $P_0$. The second scenario -- called \textit{Min $\delta$} -- is concerned with the case when the customer has set a small value for $\delta$ in order to obtain a very accurate algorithm and wants to find out for which system sizes a correctness probability higher than $P_0$ can be guaranteed. Both scenarios are applied at design time before deploying DEPAS in the production environment. However, we can imagine the \textit{Min $\delta$} scenario being also used at runtime as a subroutine of DEPAS in order to dynamically adjust the value of $\delta$ at each node, provided that an estimation of the system size is available at each node \cite{jelasity2005, montresor2009}. By applying the \textit{Min $\delta$} scenario at runtime we expect to increase the accuracy of DEPAS. However, as opposed to the design-time version, a runtime \textit{Min $\delta$} algorithm, being executed at each node, must be fast and have a low resource consumption in order not to overload the system.

In this paper, we address the above challenges for the simplified case when each node computes the same probability indicator, $pi$. This implies that each node precisely estimates the average system load. We also assume that all nodes have synchronized clocks and simultaneously run the DEPAS algorithm, although this constraint is not needed in practice. In other words, the algorithm is assumed to work in cycles where, in each cycle, each node uses the same probability to decide upon the execution of a scaling action.

Looking back at listing \ref{alg:pas}, we notice that if $pi \geq 1$, which happens when $L \geq 2L_0$, then $n \lfloor pi \rfloor$ is the number of nodes that are added for sure (where $n$ is the number of existing nodes), which represents the deterministic part of the decision. Therefore, they have no impact on the correctness probability and the case when $L \geq 2L_0$ is reducible to the case when $L \in \left[L_0 + \delta, 2 \cdot L_0\right)$. On the other hand, the addition and removal cases are symmetric. Consequently, for simplicity and without loss of generality, we will consider only the addition case. Under these considerations, the formula for computing the probability of each node, $p$, is given by Eq. (\ref{eq:prob}).
\begin{equation}
\label{eq:prob}
p =  \frac{L - L_0}{L_0},\quad L \in [L_0 + \delta, 2L_0)~.
\end{equation}

We want to express the correctness probability in function of the number of added nodes. For this purpose, Theorem \ref{th:optimal-nb-of-nodes} defines the notion of optimal number of nodes to be added as a rational number. In Theorem \ref{th:correctness-probability-equality}, the correctness probability is expressed in function of the number of added nodes and the optimal number of nodes corresponding to the bounds of the load variation interval.

\begin{theorem}
\label{th:optimal-nb-of-nodes}
Let $L_0 \in (0, 1)$, a system composed of $n$ homogenous nodes, and $L \in [L_0 + \delta, 2L_0)$ the average load of the system. Then, the optimal number of nodes to be added to the system so that the new average load will be equal to $L_0$, denoted by $M(n, L, L_0)$, is given by
\begin{equation}
\label{eq:m}
M(n, L, L_0) = n\frac{L - L_0}{L_0}~.
\end{equation}
\end{theorem}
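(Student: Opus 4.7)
The plan is to directly compute the value of $M$ that equates the new average load to $L_0$, treating $M$ as a real number (hence the qualifier ``optimal'' given as a rational value, not an integer). The workload received by the system is independent of the number of nodes serving it, so the total workload before scaling, expressed in requests per second, is $n \cdot L \cdot C$, where $C$ is the common capacity of a node. After adding $M$ new homogeneous nodes, the total capacity becomes $(n+M)\,C$, while the total workload is unchanged.

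First I would write down the definition of average load used in Section~\ref{sec:pas}: the arithmetic mean of per-node loads, which in the homogeneous case equals total workload divided by total capacity. Applying this to the post-scaling system yields the equation
\begin{equation*}
L_0 \;=\; \frac{n L C}{(n+M)\,C} \;=\; \frac{n L}{n+M}.
\end{equation*}
Then I would solve for $M$: clearing denominators gives $L_0(n+M) = nL$, whence $M = n(L - L_0)/L_0$, which is exactly~(\ref{eq:m}).

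Finally, I would briefly check that the formula makes sense in the stated regime $L \in [L_0 + \delta, 2L_0)$ with $L_0 \in (0,1)$: since $L > L_0$, the optimal $M$ is strictly positive (consistent with the addition case), and since $L < 2L_0$, we have $M < n$, so $pi = M/n \in [\delta/L_0,\,1)$, in agreement with the reduction to the sub-unitary probability-indicator case discussed just before the theorem.

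There is no real obstacle here; the result is a one-line algebraic manipulation. The only subtlety worth making explicit is that $M$ is defined as a rational number rather than an integer, so no rounding or floor/ceiling reasoning is required at this stage — the integrality issues are postponed to the probabilistic analysis that follows in Theorem~\ref{th:correctness-probability-equality}.
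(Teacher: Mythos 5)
Your proof is correct and takes essentially the same approach as the paper's: equate the conserved total workload, $nLC = (n+M)L_0C$, and solve for $M$. (Incidentally, the paper's own proof ends with a typo, writing $m = n\frac{L-L_0}{L}$ instead of $n\frac{L-L_0}{L_0}$; your derivation gives the denominator matching the theorem statement.)
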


\begin{proof}
Let $m$ be the number of nodes to be added and $C$ the common capacity of all nodes. As the system has the same workload before and after provisioning the new nodes, we have $nLC = (n + m)L_0C$, from where it turns out that $m = n\frac{L - L_0}{L}$.
\end{proof}

\begin{theorem}
\label{th:correctness-probability-equality}
Let $L_0 \in (0, 1)$, a system composed of $n$ homogenous nodes, $L \in [L_0 + \delta, 2L_0)$ the average load of the system, and $L' \in (0, 2L_0)$ the average load of the system after the addition of $m$ nodes. Then, the correctness probability is equal to the probability of $m \in \left(M(n, L, L_0 + \delta), M(n, L, L_0 - \delta)\right)$, as expressed by the formula below:
$$P(L_0 - \delta < L' < L_0 + \delta) = P(M(n, L, L_0 + \delta) < m < M(n, L, L_0 - \delta))~.$$
\end{theorem}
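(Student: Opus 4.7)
The plan is to translate the event $\{L_0-\delta < L' < L_0+\delta\}$ on the post-scaling load into an equivalent event on $m$, by inverting the workload-conservation identity already used in the proof of Theorem~\ref{th:optimal-nb-of-nodes}.

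First, I would observe that the argument of Theorem~\ref{th:optimal-nb-of-nodes} is purely algebraic: since the total workload is preserved by a scaling action, $nLC = (n+m)L'C$, so
$$L' = \frac{nL}{n+m}.$$
For fixed $n>0$ and $L>0$, the map $m \mapsto nL/(n+m)$ is a strictly decreasing bijection from $[0,\infty)$ onto $(0,L]$. Consequently, for every target $L^\star \in (0,L]$ there is a unique real $m$ producing $L' = L^\star$, and by Theorem~\ref{th:optimal-nb-of-nodes} this value is $M(n,L,L^\star) = n(L - L^\star)/L^\star$.

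Next, I would specialize to $L^\star = L_0 \pm \delta$. Under the standing hypotheses $L \in [L_0+\delta, 2L_0)$ and $L_0 \in (0,1)$ (with the implicit assumption $\delta < L_0$, so that the variation interval lies in the positive loads), both $L_0 - \delta$ and $L_0 + \delta$ lie in $(0,L]$, so the preceding paragraph applies at each endpoint. By the strict monotonicity of $m \mapsto L'$,
$$L' < L_0 + \delta \iff m > M(n,L,L_0+\delta), \qquad L' > L_0 - \delta \iff m < M(n,L,L_0-\delta).$$
Intersecting the two events and passing to probabilities yields the claimed equality.

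I do not anticipate a real obstacle: once the identity $L' = nL/(n+m)$ is isolated, the statement reduces to a change of variable through a monotone bijection. The only points worth flagging explicitly are (i) that $m$ is treated here as a real variable (matching the way $M$ was defined as a rational number in Theorem~\ref{th:optimal-nb-of-nodes}), and (ii) that the hypothesis $L \ge L_0 + \delta$ of Theorem~\ref{th:optimal-nb-of-nodes} is not needed for the algebraic identity itself, only for the interpretation of $M$ as the \emph{optimal} count, so replacing the target $L_0$ by $L_0 \pm \delta$ is legitimate.
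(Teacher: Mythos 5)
Your proposal is correct and follows essentially the same route as the paper: both start from the workload-conservation identity $nL=(n+m)L'$, hence $L'=\frac{nL}{n+m}$, and then invert the two inequalities (the paper by explicit algebraic manipulation, you by observing that $m\mapsto \frac{nL}{n+m}$ is a decreasing bijection) to land on $M(n,L,L_0+\delta)<m<M(n,L,L_0-\delta)$. Your explicit remarks that $\delta<L_0$ is needed and that $M(n,L,\cdot)$ may be evaluated at targets other than $L_0$ are sensible clarifications, but they do not change the argument.
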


\begin{proof}
As the workload of the system remains unchanged, we have $nL = (n + m)L'$, which implies that $L' = \frac{nL}{n + m}$.

Then, $P(L_0 - \delta < L' < L_0 + \delta) = P(L_0 - \delta <  \frac{nL}{n + m} < L_0 + \delta) =  P(\frac{1}{L_0 - \delta} >  \frac{n + m}{nL} > \frac{1}{L_0 + \delta}) = P(\frac{nL}{L_0 + \delta} - n < m < \frac{nL}{L_0 - \delta} - n) = P(n \frac{L - L_0 - \delta}{L_0 + \delta} < m < n \frac{L - L_0 + \delta}{L_0 - \delta}) = P(M(n, L, L_0 + \delta) < m < M(n, L, L_0 - \delta))$ (by applying Theorem \ref{th:optimal-nb-of-nodes}).
\end{proof}

In this section, we described the DEPAS algorithm and the problem we want to solve: finding whether and in which conditions a minimum correctness probability can be guaranteed. This section also prepared the ground for the formalization and theoretical analysis of the problem, which falls within the scope of the next section. 

\section{Theoretical analysis}
\label{sec:analysis}

As stated in the previous section, the DEPAS algorithm is assumed to work in cycles. In each cycle, each node uses the same probability to decide whether to add a new node or not. Consequently, a cycle of the algorithm can be modeled as a set of $n$ Bernoulli trials, $X_1, X_2, ..., X_n$. We denote by $S_n$ the outcome of the experiment, which is the number of nodes added in the respective cycle. Under this formalization, Theorem \ref{th:expected-nb-of-nodes} proves that the formula for $p$ described by Eq. (\ref{eq:prob}) was correctly chosen.  

\begin{theorem}
\label{th:expected-nb-of-nodes}
Let us consider a cycle of the DEPAS algorithm where each node adds a new node with probability $p = (L - L_0)/L_0$. Then, the expected number of added nodes is equal to the optimal number of nodes to be added.
\end{theorem}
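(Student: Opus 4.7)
The plan is to exploit the Bernoulli-trial formalization stated in the paragraph immediately preceding the theorem. Let $S_n = X_1 + X_2 + \cdots + X_n$ denote the number of nodes added in one cycle, where each $X_i$ is the indicator variable of the event that node $i$ adds a new node, so that $P(X_i = 1) = p$ and $P(X_i = 0) = 1 - p$ with $p = (L - L_0)/L_0$.

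Next, I would observe that $E[X_i] = p$ for every $i$, and apply linearity of expectation to obtain $E[S_n] = \sum_{i=1}^n E[X_i] = np$. Substituting the value of $p$ prescribed by Eq. (\ref{eq:prob}) yields
\begin{equation*}
E[S_n] = n \cdot \frac{L - L_0}{L_0}.
\end{equation*}
The final step is to recognize the right-hand side as $M(n, L, L_0)$ by Theorem \ref{th:optimal-nb-of-nodes}, which closes the proof.

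There is essentially no obstacle in this argument: it is a one-line consequence of linearity of expectation together with the definition of $M$. The only subtlety worth pointing out is that the $X_i$ need not be assumed independent for this particular statement — linearity alone suffices — so the theorem justifies the choice of $p$ in Eq. (\ref{eq:prob}) in the weakest possible probabilistic setting, which is a good sanity check before the harder analysis (where independence will be needed for concentration bounds) is carried out in the remainder of the section.
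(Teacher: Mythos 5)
Your proposal is correct and follows essentially the same route as the paper: both reduce the claim to $E(S_n) = np$ and then identify $np$ with $M(n,L,L_0)$ via Theorem \ref{th:optimal-nb-of-nodes} (the paper invokes the mean of the binomial distribution, you derive it by linearity of expectation, which is the same computation). Your side remark that independence is not needed for this step is accurate but does not change the substance of the argument.
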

\begin{proof}
If $S_n$ is the number of added nodes, then taking into consideration that $S_n$ has a binomial distribution of parameters $n$ and $p$ it follows that $E(S_n) = np = M(n, L, L_0)$ (according to Theorem \ref{th:optimal-nb-of-nodes}).
\end{proof}

Moreover, the correctness probability can be computed by using the binomial probability distribution:
\begin{equation}
\label{eq:binomial}
P(M(n, L, L_0 + \delta) < m < M(n, L, L_0 - \delta)) = \sum_{i=\left\lceil M(n, L, L_0 + \delta)\right\rceil}^{\left\lfloor M(n, L, L_0 - \delta) \right\rfloor} \binom{i}{n}p^i(1-p)^{n-i}
\end{equation}

The problem with the expression from the left-hand side of Eq. (\ref{eq:binomial}) is that it is discontinuous and non-monotonic with respect to $L$ and non-monotonic with respect to $n$ and $\delta$, thus increasing the risk of errors when using it with numerical optimization algorithms. Moreover, its evaluation is expensive due to the binomial coefficients. Therefore, a faster and less error-prone method is needed for estimating the correctness probability, or at least its lower bound.

Theorem \ref{th:minimum} uses the Chernoff-Hoeffding bounds \cite{mitzenmacher2005} to compute a lower bound of the correctness probability. More precisely, it proves that for any probability threshold $P_0 \in (0, 1)$, there is a minimum $\delta^\star$ (for a fixed $n$) or a minimum $n^\star$ (for a fixed $\delta$) so that the correctness probability is higher than or equal to $P_0$ for all $\delta \ge \delta^\star$ or for all $n \ge n^\star$, respectively. The proof of the theorem also provides a method for computing $\delta^\star$ and $n^\star$.

\begin{theorem}
\label{th:minimum}
Let us consider the DEPAS algorithm under the above assumptions. A cycle of the algorithm is modeled as a set of $n$ Bernoulli trials. Then, the following two affirmations hold.
\begin{itemize}
  \item[a.] For any $P_0\in(0,1)$ and $\delta=\delta_0$, there exists $n^\star=n^\star(P_0,\delta_0)\in\mathbb{Z}_+$ such that
\begin{align}\label{ineq.a}
P\left(M(n,L,L_0+\delta_0)<S_n<M(n,L,L_0-\delta_0)\right)\geq P_0,\qquad\forall~L\in(L_0+\delta,2L_0),~\forall~n\geq n^\star.
\end{align}
  \item[b.]  For any $P_0\in(0,1)$ and $n=n_0\in\mathbb{Z}_+$ satisfying
\begin{equation}\label{ineq.P0.n0}
1-\left(\frac{1}{3}\right)^{n_0}-\left(\frac{2}{3}\right)^{n_0}> P_0,
\end{equation}
there exists $\delta^\star=\delta^\star(P_0,n_0)\in(0,1)$ such that
\begin{align}\label{ineq.b}
P\left(M(n_0,L,L_0+\delta)<S_{n_0}<M(n_0,L,L_0-\delta)\right)\geq P_0,\qquad\forall~L\in(L_0+\delta,2L_0),~\forall~\delta\geq \delta^\star.
\end{align}
\end{itemize}
\end{theorem}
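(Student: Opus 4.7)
The plan is to model a cycle of DEPAS as $n$ i.i.d.\ Bernoulli$(p)$ trials with $p=(L-L_0)/L_0$, so that $S_n\sim\text{Bin}(n,p)$ has mean $np=M(n,L,L_0)$ by Theorem~\ref{th:expected-nb-of-nodes}. By Theorem~\ref{th:correctness-probability-equality}, the correctness probability equals $P(M(n,L,L_0+\delta)<S_n<M(n,L,L_0-\delta))$. A short calculation yields the two deviations from the mean,
\begin{equation*}
np-M(n,L,L_0+\delta)=\frac{n\delta L}{L_0(L_0+\delta)},\qquad M(n,L,L_0-\delta)-np=\frac{n\delta L}{L_0(L_0-\delta)},
\end{equation*}
both strictly positive for $\delta\in(0,L_0)$. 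Hoeffding's inequality applied to each tail, together with a union bound, then gives
\begin{equation*}
P(\text{correct})\ \ge\ 1-\exp\!\left(-\frac{2n\delta^2L^2}{L_0^2(L_0+\delta)^2}\right)-\exp\!\left(-\frac{2n\delta^2L^2}{L_0^2(L_0-\delta)^2}\right).
\end{equation*}

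Both exponentials are strictly decreasing in $L$, so the infimum of the right-hand side over $L\in(L_0+\delta,2L_0)$ is attained as $L\downarrow L_0+\delta$, producing the $L$-uniform lower bound
\begin{equation*}
\Phi(n,\delta)\ :=\ 1-\exp(-2n\delta^2/L_0^2)-\exp\!\left(-\frac{2n\delta^2(L_0+\delta)^2}{L_0^2(L_0-\delta)^2}\right),
\end{equation*}
which a short calculus check shows is strictly increasing in both $n$ and $\delta$. Part~(a) is then immediate: with $\delta=\delta_0$ fixed, $\Phi(n,\delta_0)\nearrow 1$ as $n\to\infty$, so I take $n^\star$ to be the smallest integer with $\Phi(n^\star,\delta_0)\ge P_0$ and invoke monotonicity in $n$ to conclude (\ref{ineq.a}).

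For part~(b), fix $n=n_0$ and define $\delta^\star$ as the smallest $\delta\in(0,L_0)$ with $\Phi(n_0,\delta)\ge P_0$. Monotonicity of $\Phi(n_0,\cdot)$ then delivers the inequality for all $\delta\ge\delta^\star$, so it only remains to check feasibility: $P_0$ must lie strictly below $\lim_{\delta\uparrow L_0}\Phi(n_0,\delta)=1-e^{-2n_0}$. This is automatic under the stated hypothesis, since $(1/3)^{n_0}+(2/3)^{n_0}>e^{-2n_0}$ for every $n_0\ge 1$. The specific form of the hypothesis is not a Hoeffding artifact but encodes an \emph{exact} binomial evaluation at the critical configuration $\delta=L_0/3$, $L=4L_0/3$, where $p=1/3$, $M(n_0,L,L_0+\delta)=0$, $M(n_0,L,L_0-\delta)=n_0$, and the ``not correct'' event reduces exactly to $\{S_{n_0}\in\{0,n_0\}\}$ with probability $(2/3)^{n_0}+(1/3)^{n_0}$; this is the sharpest attainable infimum in the regime $\delta\ge L_0/3$.

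The main technical obstacle I anticipate is the monotonicity in $\delta$ of the second exponential of $\Phi(n,\delta)$: the exponent $2n\delta^2(L_0+\delta)^2/[L_0^2(L_0-\delta)^2]$ mixes a growing numerator with a shrinking denominator, so to rule out any non-monotonic behaviour I would differentiate $\delta(L_0+\delta)/(L_0-\delta)$ and verify positivity of its derivative on $(0,L_0)$. Once this single calculus step is carried out, the rest of the argument is a straightforward compilation of the monotonicity, the limits, and the feasibility check.
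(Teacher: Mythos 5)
Your proof is correct, but it takes a genuinely different and substantially simpler route than the paper. The paper bounds the two tails with the Kullback--Leibler (relative-entropy) form of the Chernoff--Hoeffding inequality, $P(S_n/n\leq q)\leq e^{-nD[q,p]}$; because $D[q,p]$ is only defined for $q\in(0,1)$, the upper threshold $\frac{p+\e}{1-\e}$ exceeding $1$ forces a case split on $p\gtrless 1-2\e$ and $\e\gtrless 1/3$, and the resulting bounds $B_1,B_2$ are not monotone in $p$, so the paper must take suprema of implicitly defined functions $n_1(p),n_2(p),\e_1(p),\e_2(p),\e_3(p)$ over subintervals, justified via the implicit function theorem. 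You instead use the additive Hoeffding bound with the explicit deviations $\frac{n\delta L}{L_0(L_0\pm\delta)}$; the key payoff is that your exponents are transparently monotone in $L$, so the infimum over $L\in(L_0+\delta,2L_0)$ sits at the left endpoint and the entire case analysis collapses to a single closed-form function $\Phi(n,\delta)$ that is increasing in both arguments. Since the theorem is a pure existence statement, this suffices: your feasibility check for part (b) correctly reduces the stated hypothesis (\ref{ineq.P0.n0}) to the weaker requirement $P_0<1-e^{-2n_0}$ via $(2/3)^{n_0}>e^{-2n_0}$, and your observation that (\ref{ineq.P0.n0}) is the exact binomial correctness probability at $p=\e=1/3$ matches the paper's own derivation of it as $\lim_{p\to 1/3}h_1(p)$. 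The trade-off is quantitative: the additive bound is weaker than the KL form (most visibly for $p$ near $0$ or $1$, where $D\bigl[\frac{p-\e}{1+\e},p\bigr]$ blows up but $2\delta^2L^2/[L_0^2(L_0+\delta)^2]$ does not), so your $n^\star$ and $\delta^\star$ would be more conservative than those produced by the paper's Algorithms \ref{alg:cernoffMinN} and \ref{alg:cernoffMinDelta} --- acceptable for the theorem, less so for the tuning application the paper builds on top of it. The one step you defer, monotonicity of the second exponent in $\delta$, is immediate since $\delta(L_0+\delta)/(L_0-\delta)$ is a product of positive increasing factors on $(0,L_0)$.
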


\begin{proof}
To simplify this problem, we will use the notations (re-scaling) $$\ds\frac{L-L_0}{L_0}=p\in(0,1)\quad\textrm{and}\quad\ds\frac{\delta}{L_0}=\e\in(0,1).$$
Denoting
\begin{align*}
Prob(p,\e,n)&=P\left(M(n,L,L_0+\delta)<S_{n}<M(n,L,L_0-\delta)\right)\\
&=P\left(\frac{L-L_0-\delta}{L_0+\delta}<\frac{S_n}{n}<\frac{L-L_0+\delta}{L_0-\delta}\right)\\
&=P\left(\frac{p-\e}{1+\e}<\frac{S_n}{n}<\frac{p+\e}{1-\e}\right),
\end{align*}
our problem is twofold:
\begin{itemize}
  \item[\textbf{a.}] For a fixed $\e=\e_0$, estimate the smallest value of $n\in\mathbb{Z}_+$ such that $Prob(p,\e_0,n)\geq P_0$, for any $p\geq \e_0$;
  \item[\textbf{b.}] For a fixed $n=n_0$, estimate the smallest value of $\ds\e\in\left(0,1\right)$ such that $Prob(p,\e,n_0)\geq P_0$, for any $p\geq \e$.
\end{itemize}

It can be easily seen that
$$Prob(p,\e,n)=1-P\left(\frac{S_n}{n}\leq \frac{p-\e}{1+\e}\right)-P\left(\frac{S_n}{n}\geq \frac{p+\e}{1-\e}\right).$$

First, as $\ds 0\leq\frac{p-\e}{1+\e}\leq p$, the theorem of Chernoff-Hoeffding bounds provides the following inequality:
$$P\left(\frac{S_n}{n}\leq \frac{p-\e}{1+\e}\right)\leq e^{-nD\left[\frac{p-\e}{1+\e},p\right]},$$
where
$$D[x,y]=x\ln\frac{x}{y}+(1-x)\ln\frac{1-x}{1-y}$$
represents the Kullback-Leibler divergence between Bernoulli distributed random variables with parameters $x$ and $y$ respectively, $x,y\in(0,1)$.

On the other hand, we notice that if $p>1-2\e$, we have $\ds\frac{p+\e}{1-\e}>1$ and hence, $\ds P\left(\frac{S_n}{n}\geq \frac{p+\e}{1-\e}\right)=0$. However, if $p\leq 1-2\e$, we have $\ds p\leq\frac{p+\e}{1-\e}\leq 1$ and  the theorem of Chernoff-Hoeffding bounds provides
$$P\left(\frac{S_n}{n}\geq \frac{p+\e}{1-\e}\right)\leq e^{-nD\left[\frac{p+\e}{1-\e},p\right]}.$$

Therefore, we obtain the following lower bounds for the probability $Prob(p,\e,n)$:
\begin{equation}\label{eq.case1}
Prob(p,\e,n)\geq 1-e^{-nD\left[\frac{p-\e}{1+\e},p\right]}-e^{-nD\left[\frac{p+\e}{1-\e},p\right]},
\quad\forall~p\in[\e,1-2\e],~\e\in\left(0,\frac{1}{3}\right),~n\in\mathbb{Z}_+
\end{equation}
and
\begin{equation}\label{eq.case2}
Prob(p,\e,n)\geq 1-e^{-nD\left[\frac{p-\e}{1+\e},p\right]},\quad \forall~p\geq\max\{\e,1-2\e\},~\e\in(0,1),~n\in\mathbb{Z}_+~.
\end{equation}

In the following, let $B_1(p,\e,n)$ be the function from the right hand side of the inequality (\ref{eq.case1})
$$B_1(p,\e,n)=1-e^{-nD\left[\frac{p-\e}{1+\e},p\right]}-e^{-nD\left[\frac{p+\e}{1-\e},p\right]},
\quad\forall~p\in[\e,1-2\e],~\e\in\left(0,\frac{1}{3}\right),~n\in\mathbb{Z}_+$$
and $B_2(p,\e,n)$ the function from the right hand side of the inequality (\ref{eq.case2})
$$
B_2(p,\e,n)=1-e^{-nD\left[\frac{p-\e}{1+\e},p\right]},\quad \forall~p\geq\max\{\e,1-2\e\},~\e\in(0,1),~n\in\mathbb{Z}_+
$$

Both functions $B_1$ and $B_2$ are continuously differentiable on their respective domains of definition. It is easy to see that
$$\frac{\partial B_1}{\partial n}(p,\e,n)=D\left[\frac{p-\e}{1+\e},p\right]e^{-nD\left[\frac{p-\e}{1+\e},p\right]}+D\left[\frac{p+\e}{1-\e},p\right]e^{-nD\left[\frac{p+\e}{1-\e},p\right]}> 0$$
because of the positiveness of the Kullback-Leibler divergence, and hence, the function $B_1$ is strictly increasing with respect to $n$.

On the other hand,
$$\frac{\partial B_1}{\partial \e}(p,\e,n)=
n\frac{\partial}{\partial \e}\left(D\left[\frac{p-\e}{1+\e},p\right]\right)e^{-nD\left[\frac{p-\e}{1+\e},p\right]}
+n\frac{\partial}{\partial \e}\left(D\left[\frac{p+\e}{1-\e},p\right]\right)e^{-nD\left[\frac{p+\e}{1-\e},p\right]}~,$$
where
$$\frac{\partial}{\partial \e}\left(D\left[\frac{p-\e}{1+\e},p\right]\right)=\frac{1-p}{(1+\e)^2}\left(\ln\left(1+\frac{2\e}{1-p}\right)-\ln\left(1-\frac{\e}{p}\right)\right)> 0$$
$$\frac{\partial}{\partial \e}\left(D\left[\frac{p+\e}{1-\e},p\right]\right)=\frac{1+p}{(1-\e)^2}\left(\ln\left(1+\frac{\e}{p}\right)-\ln\left(1-\frac{2\e}{1-p}\right)\right)> 0$$
for any $\ds\e\in\left(0,\frac{1}{3}\right)$ and $p\in[\e,1-2\e]$. Therefore, the function $B_1$ is strictly increasing with respect to $\e$ as well.

In a similar way, it follows that $B_2$ is also strictly increasing with respect to the variables $\e$ and $n$.

%Considering the convex function $g(t)=t-\ln(1+t)$, defined for $t>-1$, we have
%\begin{align*}\frac{\partial}{\partial %p}\left(D\left[\frac{p-\e}{1+\e},p\right]\right)&=\frac{1}{1+\e}\left(\frac{2\e}{1-p}-\ln\left(1+\frac{2\e}{1-p}\right)+\frac{\e}{p}+\ln\left(1-\frac{\e}{p}\right)\right)\\
%&=\frac{1}{1+\e}\left(g\left(\frac{2\e}{1-p}\right)-g\left(-\frac{\e}{p}\right)\right)
%\end{align*}

\medskip

\noindent\textbf{Case a.} When $\ds\e=\e_0$ is fixed (i.e. $\delta=\delta_0=\e_0L_0$), we have two subcases.

\noindent\textbf{a.1.} If $\ds\e_0\in\left(0,\frac{1}{3}\right)$ it can be proved that for any $p\in[\e_0,1-2\e_0]$, there exists a unique $n_1(p)\in(0,\infty)$ satisfying the equation
$$B_1(p,\e_0,n_1(p))=P_0.$$
Indeed, for $p\in[\e_0,1-2\e_0]$ arbitrarily fixed, the function $n\mapsto B_1(p,\e_0,n)$ is continuous, strictly increasing, $B_1(p,\e_0,0)=-1$ and $\lim\limits_{n\rightarrow\infty}B_1(p,\e_0,n)=1$. Therefore, $n\mapsto B_1(p,\e_0,n)$ is a bijective mapping between $(0,\infty)$ and the interval $(-1,1)$. As $P_0\in(0,1)$, there exists a unique solution $n_1(p)\in(0,\infty)$ of the equation $B_1(p,\e_0,n)=P_0$.
Moreover, the implicit function theorem guarantees that the function $n_1(p)$ defined above is continuously differentiable on $[\e_0,1-2\e_0]$, and hence, it is bounded. We denote
$$\ds n_1^\star=\sup\limits_{p\in[\e_0,1-2\e_0]}n_1(p).$$

On the other hand, for any $p\in(1-2\e_0,1)$, there exists a unique $n_2(p)\in(0,\infty)$ satisfying the equation
$$B_2(p,\e_0,n_2(p))=P_0,$$
given explicitly by
\begin{equation}\label{eq.n2}
n_2(p)=\frac{-\ln(1-P_0)}{D\left[\frac{p-\e_0}{1+\e_0},p\right]}.
\end{equation}
As $\ds\lim\limits_{p\rightarrow 1}D\left[\frac{p-\e_0}{1+\e_0},p\right]=\infty$, we obtain that $n_2(p)$ is bounded on the interval $(1-2\e_0,1)$  and let
$$\ds n_2^\star=\sup\limits_{p\in(1-2\e_0,1)}n_2(p).$$

Considering $$n^\star=n^\star(P_0,\e_0)=\left\lceil\max\{n_1^\star,n_2^\star\}\right\rceil\in\mathbb{Z}_+$$
from the fact that $B_1$ is increasing with respect to $n$, we clearly have
$$B_1(p,\e_0,n)\geq B_1(p,\e_0,n^\star)\geq B_1(p,\e_0,n_1^\star)\geq B_1(p,e_0,n_1(p))=P_0,\quad\forall~p\in[\e_0,1-2\e_0],~\forall n\geq n^\star,$$
and therefore,
$$Prob(p,\e_0,n)\geq P_0,\quad\forall~p\in[\e_0,1-2\e_0],~\forall n\geq n^\star.$$
Similarly, as $B_2$ is increasing with respect to $n$, we obtain
$$B_2(p,\e_0,n)\geq B_2(p,\e_0,n^\star)\geq B_2(p,\e_0,n_2^\star)\geq B_2(p,e_0,n_2(p))=P_0,\quad\forall~p\in(1-2\e_0,1),~\forall n\geq n^\star,$$
and therefore,
$$Prob(p,\e_0,n)\geq P_0,\quad\forall~p\in(1-2\e_0,1),~\forall n\geq n^\star.$$
In conclusion, we obtain:
$$Prob(p,\e_0,n)\geq P_0,\quad\forall~p\geq \e_0,~\forall n\geq n^\star.$$

\medskip

\noindent\textbf{a.2.} If $\ds\e_0\geq \frac{1}{3}$, we know from inequality (\ref{eq.case2}) that
$$Prob(p,\e_0,n)\geq B_2(p,\e_0,n),\quad\forall~p\geq \e_0,~n\in\mathbb{Z}_+.$$
For any $p\in[\e_0,1)$, the unique solution of the equation $B_2(p,e_0,n)=P_0$ is $n_2(p)$ given by (\ref{eq.n2}). Denoting
$$n^\star=n^\star(P_0,\e_0)=\left\lceil\sup\limits_{p\in[\e_0,1)}n_2(p)\right\rceil,$$
as $B_2$ is increasing with respect to $n$, we obtain
$$B_2(p,\e_0,n)\geq B_2(p,\e_0,n^\star)\geq B_2(p,\e_0,n_2(p))=P_0,\quad\forall~p\in[\e_0,1),~\forall n\geq n^\star,$$
and hence
$$Prob(p,\e_0,n)\geq P_0,\quad\forall~p\geq \e_0,~\forall n\geq n^\star.$$

\medskip

\noindent\textbf{Case b.} When $n=n_0\in\mathbb{Z}_+$ is fixed, we will consider two scenarios.

\noindent\textbf{b.1} For $\ds p\leq\frac{1}{3}$ the lower bound for the probability $Prob(p,\e,n_0)$ is given by $B_1(p,\e,n_0)$ according to (\ref{eq.case1}).

Let $p\in\left(0,\frac{1}{3}\right]$ be arbitrarily fixed. The equation
$$B_1(p,\e,n_0)=P_0$$
has at most one solution $\e(p)\leq p$. Indeed, we have $B_1(p,0,n_0)=-1$ and
$$\lim_{\e\rightarrow p}B_1(p,\e,n_0)=1-(1-p)^{n_0}-e^{-n_0D\left[\frac{2p}{1-p},p\right]}$$
It is a simple calculus exercise to show that the function
$$h_1(p)=1-(1-p)^{n_0}-e^{-n_0D\left[\frac{2p}{1-p},p\right]}$$
is strictly increasing on the interval $\left(0,\frac{1}{3}\right)$, $\displaystyle\lim_{p\rightarrow 0}h_1(p)=-1$ and
$$\lim_{p\rightarrow \frac{1}{3}}h_1(p)=1-\left(\frac{1}{3}\right)^{n_0}-\left(\frac{2}{3}\right)^{n_0}>P_0$$
according to inequality (\ref{ineq.P0.n0}). Tt follows that
there exists a unique $p_1^\star\in\left(0,\frac{1}{3}\right)$ such that $h_1(p_1^\star)=P_0$. Therefore, if $p\in(0,p_1^\star]$, the equation $B_1(p,\e,n_0)=P_0$ has no solution, but if $p\in\left(p_1^\star,\frac{1}{3}\right]$, the equation $B_1(p,\e,n_0)=P_0$ has a unique solution $\e_1(p)$, satisfying $\e_1(p)\leq p$. Moreover, from the implicit function theorem we obtain that $\e_1(p)$ is continuously differentiable on $\left(p_1^\star,\frac{1}{3}\right]$, and therefore, it is bounded.

%We remark here that if inequality (\ref{ineq.P0.n0}) is not satisfied, it means that
%$$B_1(p,\e,n_0)\leq h_1(p)<1-\left(\frac{1}{3}\right)^{n_0}-\left(\frac{2}{3}\right)^{n_0}\leq P_0,~\quad\forall \e\leq p$$
%and therefore, the equation $B_1(p,\e,n_0)=P_0$ has no solution.

Note that $\lim\limits_{p\rightarrow p_1^\star}\e_1(p)=p_1^\star$ because we have $\lim\limits_{\e\rightarrow p_1^\star}B_1(p_1^\star,\e,n_0)=h_1(p_1^\star)=P_0$, and hence, we can extend the function $\e_1$ by continuity, considering $\e_1(p_1^\star)=p_1^\star$.

Denoting
$$\e_1^\star=\sup\limits_{p\in\left[p_1^\star,\frac{1}{3}\right]}\e_1(p),$$
we first observe that $\e_1^\star\geq \e_1(p_1^\star)=p_1^\star$ and therefore, using the fact that $B_1$ is increasing with respect to $\e$, we obtain
\begin{equation}\label{ineq.b1}
B_1(p,\e_1^\star,n_0)\geq B_1(p,\e_1(p),n_0)=P_0,\quad\forall~p\in \left[\e_1^\star,\frac{1}{3}\right).
\end{equation}

\noindent\textbf{b.2} For $\ds p>\frac{1}{3}$, the lower bound for $Prob(p,\e,n_0)$ can be expressed from  (\ref{eq.case1}) and  (\ref{eq.case2}) as
$$B(p,\e,n_0)=\left\{
                \begin{array}{ll}
                  B_1(p,\e,n_0), & \textrm{if }\e\in\left(0,\frac{1-p}{2}\right) \\
                  B_2(p,\e,n_0), & \textrm{if }\e\in\left[\frac{1-p}{2},p\right)
                \end{array}
              \right.
$$
The function $B$ is increasing with respect to $\e$
$$\lim_{\e\rightarrow 0}B(p,\e,n_0)=\lim_{\e\rightarrow 0}B_1(p,\e,n_0)=-1$$
and
$$\lim_{\e\rightarrow p}B(p,\e,n_0)=\lim_{\e\rightarrow p}B_2(p,\e,n_0)=1-(1-p)^{n_0}>1-\left(\frac{2}{3}\right)^{n_0}>P_0$$
The function $B$ has a jump discontinuity at $\e=\frac{1-p}{2}$:
$$\lim_{\e\uparrow \frac{1-p}{2}}B(p,\e,n_0)=\lim_{\e\rightarrow \frac{1-p}{2}}B_1(p,\e,n_0)=1-e^{-n_0D\left[\frac{3p-1}{3-p},p\right]}-p^{n_0}=h_2(p),$$
$$\lim_{\e\downarrow \frac{1-p}{2}}B(p,\e,n_0)=\lim_{\e\rightarrow \frac{1-p}{2}}B_2(p,\e,n_0)=1-e^{-n_0D\left[\frac{3p-1}{3-p},p\right]}=h_3(p).$$
It is easy to check that the functions $h_2(p)$ and $h_3(p)$ defined above are continuous and strictly decreasing on the interval $\left(\frac{1}{3},1\right)$ and
$h_2(p)<h_3(p)$ for any $p\in\left(\frac{1}{3},1\right)$. Moreover:
$$\lim_{p\rightarrow\frac{1}{3}}h_2(p)=1-\left(\frac{2}{3}\right)^{n_0}-\left(\frac{1}{3}\right)^{n_0}>P_0\quad\textrm{and}\quad \lim_{p\rightarrow 1}h_2(p)=-1,$$
and
$$\lim_{p\rightarrow\frac{1}{3}}h_3(p)=1-\left(\frac{2}{3}\right)^{n_0}>P_0\quad\textrm{and}\quad \lim_{p\rightarrow 1}h_3(p)=0.$$
Therefore, there exist unique values $p_2^\star,p_3^\star\in \left(\frac{1}{3},1\right)$, $p_2^\star<p_3^\star$, such that $h_2(p_2^\star)=P_0$ and $h_3(p_3^\star)=P_0$.

If $p\in\left(\frac{1}{3},p_2^\star\right)$, it follows that
$$\lim_{\e\uparrow \frac{1-p}{2}}B(p,\e,n_0)=\lim_{\e\rightarrow \frac{1-p}{2}}B_1(p,\e,n_0)=h_2(p)>h_2(p_2^\star)=P_0$$
and hence, since $B_1$ is increasing with respect to $\e$, the equation $B_1(p,\e,n_0)=P_0$ has a unique solution $\e_2(p)$, satisfying $\e_2(p)\leq \frac{1-p}{2}<\frac{1}{3}$. Moreover, from the implicit function theorem we obtain that $\e_2(p)$ is continuously differentiable on $\left(\frac{1}{3},p_2^\star\right)$, and therefore, it is bounded. We can extended $\e_2$ by continuity by considering
$\e_2(p_2^\star)=\frac{1-p_2^\star}{2}.$

Denoting
$$\e_2^\star=\sup\limits_{p\in\left(\frac{1}{3},p_2^\star\right]}\e_2(p),$$
and using the fact that $B$ is increasing with respect to $\e$, we obtain
\begin{equation}\label{ineq.b21}
B(p,\e_2^\star,n_0)\geq B(p,\e_2(p),n_0)=B_1(p,\e_2(p),n_0)=P_0,\quad\forall~p\in\left(\frac{1}{3},p_2^\star\right].
\end{equation}

If $p\in(p_2^\star,p_3^\star]$ we have $h_3(p)\geq h_3(p_3^\star)=P_0$ and $\e_2^\star\geq \e_2(p_2^\star)=\frac{1-p_2^\star}{2}>\frac{1-p}{2}$.  Taking into consideration that $B_2$ is increasing with respect to $\e$, we obtain:
\begin{equation}\label{ineq.b22}
B(p,\e_2^\star,n_0)=B_2(p,\e_2^\star,n_0)\geq \lim_{\e\rightarrow \frac{1-p}{2}}B_2(p,\e,n_0)=h_3(p)\geq P_0,~\quad\forall~p\in(p_2^\star,p_3^\star].
\end{equation}

If $p\in(p_3^\star,1)$, since $h_3(p)<h_3(p_3^\star)=P_0$, we have that
$$\lim_{\e\rightarrow \frac{1-p}{2}}B_2(p,\e,n_0)=h_3(p)<P_0\quad\textrm{and}\quad \lim_{\e\rightarrow p}B_2(p,\e,n_0)>P_0$$
and therefore, the equation $B_2(p,\e,n_0)=P_0$ has a unique solution $\e_3(p)$ such that $\frac{1-p}{2}<e_3(p)<p$. The function $\e_3(p)$ is continuously differentiable, bounded, and it can be extended by continuity, considering $\e_3(p_3^\star)=\frac{1-p_3^\star}{2}$.

Denoting
$$\e_3^\star=\sup\limits_{p\in\left[p_3^\star,1\right)}\e_3(p),$$
and using the fact that $B$ is increasing with respect to $\e$, we obtain
\begin{equation}\label{ineq.b23}
B(p,\e_3^\star,n_0)\geq B(p,\e_3(p),n_0)=B_2(p,\e_3(p),n_0)=P_0,\quad\forall~p\in\left(p_3^\star,1\right).
\end{equation}

Let
$$\e^\star=\max\{\e_1^\star,\e_2^\star,\e_3^\star\}.$$
Combining the inequalities (\ref{ineq.b1}), (\ref{ineq.b21}), (\ref{ineq.b22}) and (\ref{ineq.b23}), and taking into consideration that $B$ is increasing with respect to $\e$, we obtain:
$$B(p,\e,n_0)\geq B(p,\e^\star,n_0)\geq P_0,\quad\forall~p\geq\e^\star,~\forall\e\geq\e^\star.$$
and finally:
$$Prob(p,\e,n_0)\geq P_0,\quad\forall~p\geq\e^\star,~\forall\e\geq\e^\star.$$
The proof is now complete.
\end{proof}

\begin{remark}
Instead of the Chernoff-Hoeffding inequalities used in the proof of the previous Theorem, it is possible to compute a lower bound of the correctness probability by means of the well known one-sided Chebyshev inequalities. For example, in a similar manner as in the previous proof, one may obtain the following formula for the estimate of $n^\star$, using Chebyshev bounds:
\begin{equation}\label{n.chebyshev}
n^{\star\star}=\left\{
                 \begin{array}{ll}
                   \ds\left\lceil\max\left\{\frac{P_0(1+\e_0^2)+\sqrt{(1-\e_0^2)^2+4P_0^2\e_0^2}}{8\e_0^2(1-P_0)}~~,~~
                    \frac{P_0}{1-P_0}\cdot\frac{(1-2\e_0)(1+\e_0)^2}{2\e_0(1-\e_0)^2}\right\}\right\rceil, & \textrm{if }\ds\e_0<\frac{1}{3}, \\\\
                   \ds\left\lceil\frac{P_0}{1-P_0}\cdot\frac{1-\e_0}{\e_0}\right\rceil, & \textrm{if }\ds\e_0\geq\frac{1}{3}.
                 \end{array}
               \right.
\end{equation}
However, we note that Chernoff-Hoeffding inequalities lead to better results, since they give exponentially decreasing bounds on tail distributions, while Chebyshev inequalities yield only power-law bounds on tail decay.
\end{remark}

\section{Algorithms for DEPAS tuning}
\label{sec:alg}

In this section, based on the results of the previous section, we formulate the algorithms for estimating the minimum number of nodes and the minimum load variation threshold, respectively, for which the Chernoff-Hoeffding lower bounds of the correctness probability are higher than or equal to the given $P_0$. In order to get an idea of how close the estimations provided by the Chernoff-Hoeffding bounds are to the real minimums, we also provide algorithms for computing \textit{Min n} and \textit{Min $\delta$} based on the binomial formula. The functions used in the algorithms together with their properties are shown in Table \ref{table:functions}. These properties (continuity, differentiability, monotony) determine whether the functions can be used with some numerical optimization algorithms and their specific meaning was explained for each function in the proof of Theorem \ref{th:minimum}.

\begin{table}
    \caption{Functions used in the tuning algorithms}
\begin{center}
\begin{tabularx}{\textwidth}{ | p{6cm} | X | }
	\hline
	\textbf{Function} & \textbf{Properties} \\
	\hline
	$$B_1(p, \e, n) = 1-e^{-nD\left[\frac{p-\e}{1+\e},p\right]}-e^{-nD\left[\frac{p+\e}{1-\e},p\right]}$$ &Continuously differentiable, strictly increasing with respect to $\e$ and $n$ when $p\in[\e,1-2\e],~\e\in\left(0,\frac{1}{3}\right),~n\in\mathbb{Z}_+$  \\
	\hline
	$$B_2(p,\e,n) = 1-e^{-nD\left[\frac{p-\e}{1+\e},p\right]}$$ & Continuously differentiable, strictly increasing with respect to $\e$ and $n$ when $p\geq\max\{\e,1-2\e\},~\e\in(0,1),~n\in\mathbb{Z}_+$ \\
	\hline
	$$s(p) = \frac{-\ln(1-P_0)}{D\left[\frac{p-\e_0}{1+\e_0},p\right]}$$ & Continuously differentiable and concave when $p\in\left(1-2\e_0, 1\right), \e_0\in\left(0, \frac{1}{3}\right)$ \\
	\hline
	$$h_1(p) = 1-(1-p)^{n_0}-e^{-n_0D\left[\frac{2p}{1-p},p\right]}$$ & Continuously differentiable, strictly increasing on $\left(0, \frac{1}{3}\right)$ \\
	\hline
	$$h_2(p) = 1-e^{-n_0D\left[\frac{3p-1}{3-p},p\right]}-p^{n_0}$$ & Continuously differentiable, strictly decreasing on $\left(\frac{1}{3}, 1\right)$ \\
	\hline
	$$h_3(p) = 1-e^{-n_0D\left[\frac{3p-1}{3-p},p\right]}$$ & Continuously differentiable, strictly decreasing on $\left(\frac{1}{3}, 1\right)$\\
	\hline
	$$g(n) = 1-\left(\frac{2}{3}\right)^{n}-\left(\frac{1}{3}\right)^{n}$$ & Continuously differentiable, strictly increasing on $\left(0, \infty\right)$ \\
	\hline
	$$Bin(L, \delta, n) = \sum_{i=\left\lceil M(n, L, L_0 + \delta)\right\rceil}^{\left\lfloor M(n, L, L_0 - \delta) \right\rfloor} \binom{i}{n}p^i(1-p)^{n-i}$$ & Discontinuous and non-monotonic with respect to $L$ and non-monotonic with respect to $\delta$ and $n$ \\
	\hline
	\end{tabularx}
\end{center}
    \label{table:functions}
\end{table}

\subsection{Algorithms based on the Chernoff-Hoeffding bound}

Theorem \ref{th:minimum} used Chernoff-Hoeffding inequalities to derive a lower bound on the correctness probability. The bound is expressed by the $B_1(p, \e, n)$ and $B_2(p, \e, n)$ functions (see Table \ref{table:functions}). By fixing either $\e$ (in the \textit{Min n} scenario) or $n$ (in the \text{Min $\delta$} scenario), $B_1$ and $B_2$ become functions of two variables and one problem is to find the minimum value of the second variable (either $n$ or $\e$, respectively) so that the bound is higher than or equal to the given $P_0$, for any possible value of $p$.

\begin{algorithm}
\caption{Min y monotonic}
\label{alg:min-y-monotonic}
\begin{algorithmic}
\STATE \textbf{Input:} Function $f : I_1 \times I_2 \rightarrow \mathbb{R}$ continuously differentiable on the intervals $I_1$ and $I_2$ and strictly increasing on $I_2$, precisions $s_1$ and $s_2$
\STATE \textbf{Output:} Minimum $y^\star \in I_1$ so that $f(x, y^\star) \ge 0 \forall x \in I_1 $
\STATE Let $l$ be an empty list
\FORALL{$x' \in I_1$ increasing with step $s_1$}
	\STATE Use a root finding method to derive the root of the equation $f(x', y) = 0$ with precision $s_2$.
	\STATE Let $y'$ be the found root.
	\STATE Append $y'$ to $l$
\ENDFOR
\STATE $y^\star \gets max(l)$
\end{algorithmic}
\end{algorithm}

Algorithm \ref{alg:min-y-monotonic} provides a solution to the above problem for a generic function of two variables $f : I_1 \times I_2 \rightarrow \mathbb{R}$, that shares the same properties as $B_1$ and $B_2$. As in general, $B_1$ and $B_2$ are neither monotonic nor convex with respect to $p$, the algorithm has to take increasing numbers from interval $I_2$ with precision $s_1$. Then, taking advantage of the fact that $B_1$ and $B_2$ are strictly increasing with respect to $\e$ and $n$, for each considered value of $y' \in I_2$ we can use a root finding method to compute the minimum $x'$ for which $f(x', y') \ge 0$. Then, the maximum of the $x'$ values satisfies the inequality for all $y'$ values that were considered.

\begin{algorithm}
\caption{Cernoff min n}
\label{alg:cernoffMinN}
\begin{algorithmic}
\STATE \textbf{Input:} $L_0 \in (0,1), \delta_0 \in (0, L_0), P_0 \in (0, 1)$, precisions $s_n$ and $s_p$
\STATE \textbf{Output:} $n^\star \in \mathbb{Z}_+$
\STATE Compute $N$ with formula \ref{n.chebyshev}
\STATE  $\e_0 \gets \frac{\delta_0}{L_0}$
\IF {$\e_0 \in \left(0, \frac{1}{3}\right)$}
	\STATE Use algorithm \ref{alg:min-y-monotonic} to find the minimum $n_1^\star \in 1..N$ for which $f(p, n_1^\star) = B_1(p, \e_0, n_1^\star) - P_0 \ge 0, \forall p\in\left[\e_0,1-2\e_0\right], s_1 = s_p, s_2 = s_n$
	\STATE Use a concave optimization method to find $n_2^\star \gets \sup\limits_{p\in(1-2\e_0,1)}s(p)$ with precision $s_p$
	\STATE $n^\star=\left\lceil\max\{n_1^\star,n_2^\star\}\right\rceil$
\ENDIF
\IF {$\e_0 \in \left[\frac{1}{3}, 1\right)$}
	\STATE Use a concave optimization method to find $n_3^\star \gets \sup\limits_{p\in[\e_0,1)}s(p)$ with precision $s_p$
	\STATE $n^\star=\left\lceil n_3^\star\right\rceil$
\ENDIF
\end{algorithmic}
\end{algorithm}

Algorithm \ref{alg:min-y-monotonic} is needed by both Algorithm \ref{alg:cernoffMinN} (i.e., \textit{Cernoff min n}) and Algorithm \ref{alg:cernoffMinDelta} (i.e., \textit{Cernoff min $\delta$}). In fact, Algorithms \ref{alg:cernoffMinN} and \ref{alg:cernoffMinDelta} just translate the proof of Theorem \ref{th:minimum} into algorithmic language. Note that Algorithm \ref{alg:cernoffMinN} uses the Chebyshev inequality (which is weaker than the Chernoff-Hoeffding inequality) to compute an upper bound of the number of nodes.

\begin{algorithm}
\caption{Cernoff min $\delta$}
\label{alg:cernoffMinDelta}
\begin{algorithmic}
\STATE \textbf{Input:} $L_0 \in (0,1), n_0 \in \mathbb{Z}_+, P_0 \in (0,1)$, precisions $s_{\e}$ and $s_p$
\STATE \textbf{Output:} $\delta^\star \in (0, L_0)$
\IF {$1-\left(\frac{2}{3}\right)^{n_0}-\left(\frac{1}{3}\right)^{n_0} \le P_0$}
	\STATE \textbf{exit} with no solution
\ELSE	
	\STATE Use a root finding method to derive the unique solution $p_1^\star \in \left(0, \frac{1}{3}\right)$ of the equation $h1(p) = P_0$ with precision $s_p$
	\STATE Use algorithm \ref{alg:min-y-monotonic} to find the minimum $\e_1^\star < p_1^\star$ for which $f(p, \e_1^\star) = B_1(p, \e_1^\star, n_0) - P_0 \ge 0, \forall p\in\left(p_1^\star, \frac{1}{3}\right], s_1 = s_p, s_2 = s_{\e}$
	\STATE Use a root finding method to derive the unique solution $p_2^\star \in \left(\frac{1}{3}, 1\right)$ of the equation $h2(p) = P_0$ with precision $s_p$
	\STATE Use algorithm \ref{alg:min-y-monotonic} to find the minimum $\e_2^\star < p_2^\star$ for which $f(p, \e_2^\star) = B_1(p, \e_2^\star, n_0) - P_0 \ge 0, \forall p\in\left(\frac{1}{3}, p_2^\star\right], s_1 = s_p, s_2 = s_{\e}$
	\STATE Use a root finding method to derive the unique solution $p_3^\star \in \left(\frac{1}{3}, 1\right)$ of the equation $h3(p) = P_0$ with precision $s_p$
	\STATE Use algorithm \ref{alg:min-y-monotonic} to find the minimum $\e_3^\star < p_3^\star$ for which $f(p, \e_3^\star) = B_2(p,\e_3^\star, n_0) - P_0 \ge 0, \forall p \in\left[p_3^\star, 1\right), s_1 = s_p, s_2 = s_{\e}$
	\STATE $\e^\star \gets \max\left\{\e_1^\star,\e_2^\star,\e_3^\star\right\}$
	\STATE $\delta^\star \gets \e^\star L_0$
\ENDIF
\end{algorithmic}
\end{algorithm}

Note that, from a theoretical point of view, the accuracy of Algorithm \ref{alg:min-y-monotonic} (and implicitly that of Algorithms \ref{alg:cernoffMinN} and \ref{alg:cernoffMinDelta}) may be influenced by the precision $s_1$. The possible correlation between the output accuracy and the precision is checked experimentally in Section \ref{sec:tests}.

\subsection{Algorithms based on the binomial formula}
The binomial formula, expressed by function $Bin(L, \delta, n)$ from Table \ref{table:functions}, allows us to compute the exact value of the correctness probability. However, it is more difficult to compute the minimum $n$ or $\delta$ based on the binomial formula than based on the Chernoff-Hoeffding bounds because the binomial function $Bin(L, \delta, n)$ is discontinuous with respect to $L$ and non-monotonic with respect to $\delta$ and $n$. A solution is given by Algorithm \ref{alg:min-y}, where $x$ accounts for either $\delta$ or $n$ and $y$ represents $L$. The main idea is to check for each $y$ starting from its maximum possible value and decreasing by a given step whether the function is positive for all $x$ taken with a certain precision. The solution is given by the minimum $y$ for which the function is positive for all considered values of $x$.

\begin{algorithm}
\caption{Min y}
\label{alg:min-y}
\begin{algorithmic}[1]
\STATE \textbf{Input:} Function $f : I_1 \times I_2 \rightarrow \mathbb{R}$, precisions $s_1$ and $s_2$
\STATE \textbf{Output:} Minimum $y^\star \in I_1$ so that $f(x, y^\star) \ge 0 \forall x \in I_1 $
\STATE $y' \gets 0$
\FORALL{$y \in I_2$ decreasing with step $s_2$}
	\STATE $y' \gets y$
	\FORALL{$x \in I_1$ increasing with step $s_1$}
		\IF{$f(x,y) < 0$}
			\STATE \textbf{go to} \ref{marker}
		\ENDIF
	\ENDFOR
\ENDFOR
\STATE $y^\star \gets y' + s_2$ \label{marker}
\end{algorithmic}
\end{algorithm}

Concrete algorithms for approximating the minimum $n$ (\textit{Binomial min n}) and the minimum $\delta$ (\textit{Binomial min $\delta$}) are obtained by instantiating Algorithm \ref{alg:min-y} for functions $f_1(x,y) = f_1(L, n) = Bin(L,\delta_0,n) - P_0$ and $f_2(x,y) = f_2(L,\delta) = Bin(L,\delta,n_0) - P_0$, respectively. A maximum bound for $n$ can be computed based on either Chebyshev or Chernoff-Hoeffding inequalities.

Note that, due to the discontinuity of the binomial function, the binomial algorithms are prone to numerical errors and, therefore, not recommended to be used by the customer of DEPAS. Their sole objective is to allow us to estimate the accuracy of the Chernoff algorithms as shown in the next section.

\section{Experimental results}
\label{sec:tests}

The DEPAS tuning algorithms described in the previous section were implemented in Java. We used simple root and supremum finding algorithms based on the bisection method. The DEPAS Tuning Tool, which provides both graphical and command line facilities for running the tuning algorithms in a large spectrum of scenarios, is available for download \cite{depas-tuning-tool}.

The DEPAS Tuning Tool was used to obtain the experimental results presented in this section. There are three objectives of the experiments: (i) to estimate the accuracy of Chernoff algorithms by comparing their results with the ones of the binomial algorithms, (ii) to make an idea about the execution time of the tuning algorithms, and (iii) to see how the accuracy of Chernoff algorithms is affected by the precision of the load. Two experiments (one with \textit{Min n} algorithms and another one with \textit{Min $\delta$} algorithms) were performed for each objective, thus resulting in a total of six experiments.

All experiments were performed on Amazon EC2 Large Instances (7.5 GB memory, 4 EC2 Compute Units, 64-bit platform) running Amazon Linux. In all experiments we set $L_0 = 0.8$, and $P_0 = 0.99$. The default precisions were $s_n = 10^{-1}, s_{\e} = 10^{-3}$, and $s_p = 10^{-3}$.

\begin{figure}
    \begin{center}
        \includegraphics[scale=1]{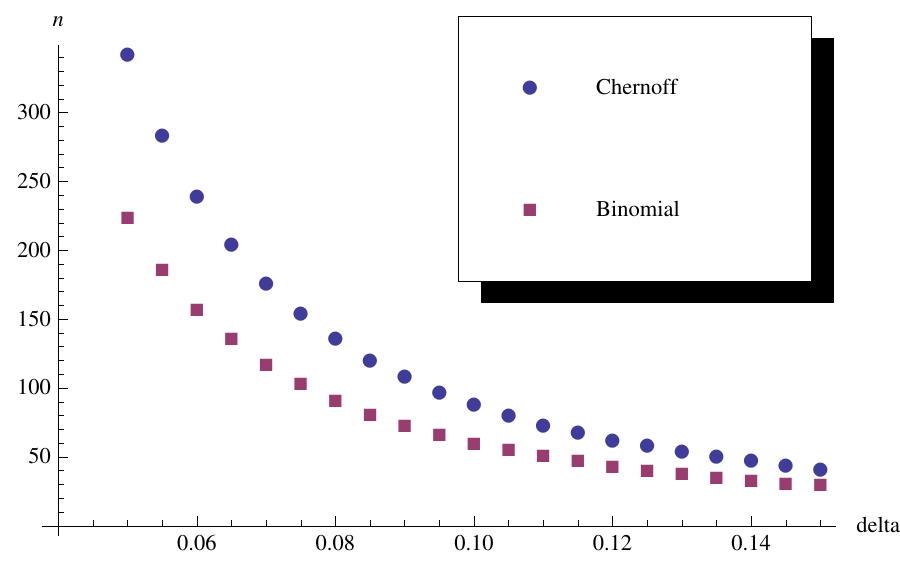}
        \caption{\textit{Min n} experimental results: Chernoff vs. Binomial}
        \label{fig:min-n-results}
    \end{center}
\end{figure}

In the first experiment, the number of nodes was computed with both Chernoff and binomial algorithms for several values of $\delta$ taken from the interval $\left[0.05, 0.15\right]$ with step $0.005$. From Figure \ref{fig:min-n-results}, we can see that \textit{Chernoff min n} and \textit{Binomial min n} algorithms give relatively close results and that the results become closer with the increase of $\delta$. For example, for $\delta = 0.05$ Chernoff gives $n = 342$ and binomial gives $n = 224$, while for  $\delta = 0.15$ the result of Chernoff is 41 compared to binomial's 30.

In the second experiment the values of $\delta$ were computed with both Chernoff and binomial algorithms for several values of $n$ taken from the interval $\left[25, 1000\right]$ with step $5$. As shown in Figure \ref{fig:min-n-results}, the results are close and become closer with higher values of $n$. To give just two examples, for $n = 100$, Chernoff gives $\delta = 0.094$ and binomial gives $\delta = 0.075$, while for $n = 1000$ Chernoff's $\delta$ is $0.03$ and binomial's is $0.023$.

\begin{figure}
    \begin{center}
        \includegraphics[scale=1]{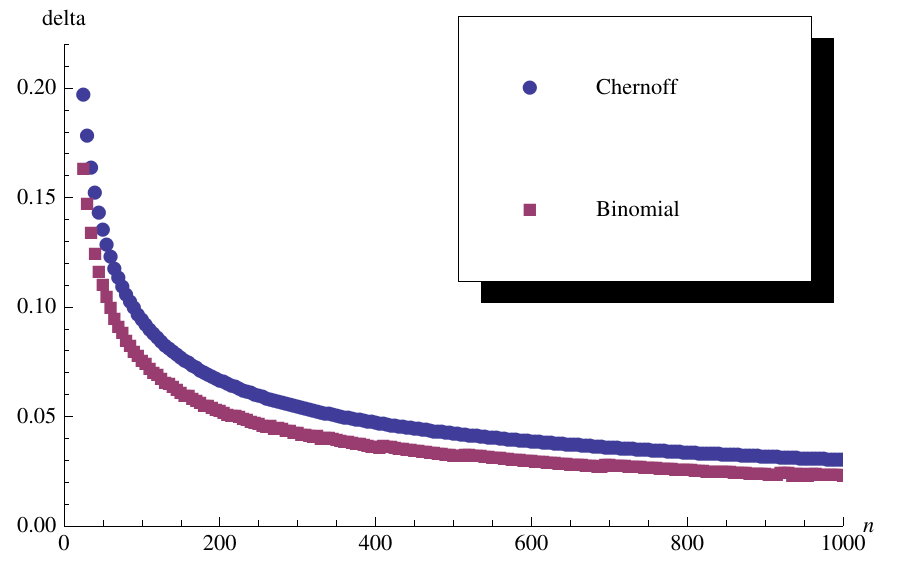}
        \caption{\textit{Min $\delta$} experimental results: Chernoff vs. Binomial}
        \label{fig:min-n-results}
    \end{center}
\end{figure}

The third experiment was actually a series of 32 identical, but independently performed experiments that ran the \textit{Cernoff min n} and \textit{Binomial min n} algorithms for several values of $\delta$. The average execution time of both algorithms is shown in Table \ref{table:execution-time}. Note that the result of the \textit{Cernoff min n} algorithm was used as an upper bound in the \textit{Binomial min n} algorithm and, therefore, the execution time of the latter includes the execution time of the former. We can notice that \textit{Cernoff min n} is very fast no matter what the value of $\delta$ is, while \textit{Binomial min n} is slow for small values of $\delta$, but its execution time decreases with the increase of $\delta$.

The fourth experiment derived the execution time of the \textit{Cernoff min $\delta$} and \textit{Binomial min $\delta$} algorithms as an average of the execution times obtained from a set of 32 identical experiments. Several values of $n$ were considered and the results are shown in Table \ref{table:execution-time}. We can see that Chernoff performs extremely fast again, while Binomial is slow for high values of $n$, improving when $n$ decreases.

\begin{table}
\begin{center}
    \caption{Average execution time (in seconds) for \textit{Min n} and \textit{Min $\delta$}: Chernoff vs. Binomial}
    \label{table:execution-time}
\begin{tabular}{ | r | r | r || r | r | r |}
	\hline
	\multicolumn{3}{|c||}{\textit{Min n}} & \multicolumn{3}{|c|}{\textit{Min $\delta$}} \\ \hline
	$\delta$ & Chernoff & Binomial & n & Chernoff & Binomial \\ \hline
	0.05 &0.013 & 280.804 & 25 & 0.011 & 1.580 \\
	0.075 &0.010 & 29.461 & 250 & 0.021 & 29.004\\
	0.1 & 0.008 & 6.496 & 500 & 0.020 & 89.767\\
	0.125 & 0.006 & 2.081 & 750 & 0.014 & 178.924\\
	0.15 & 0.005 & 0.803 & 1000 & 0.016 & 310.619\\
	\hline
\end{tabular}
\end{center}
\end{table}

The last two experiments checked whether the results of \textit{Chernoff min n} and \textit{Chernoff min $\delta$} algorithms are influenced by the precision of the load, $s_p$. Therefore, experiments 5 and 6 executed the Chernoff algorithms in the same conditions as experiments 1 and 2, respectively, but for three different load precisions: $10^{-3}, 10^{-4}, 10^{-5}$. We found that the three-decimal results computed for the above load precisions were identical in all the considered cases, which means that it is worthless to use lower than $10^{-3}$ load precisions.

In conclusion, the Chernoff algorithms produce results that are close to the real minimums, are very fast, and their results accuracy does not improve with low load precisions. Therefore, they can be used in both \textit{Min n} and \textit{Min $\delta$} scenarios at both design time and runtime.

\section{Related work}
\label{sec:related}

Randomized algorithms have found widespread applicability due to their simplicity and speed \cite{motwani1996}. A brief survey on randomized algorithms \cite{motwani1996} as well as a more recent and detailed one \cite{motwani2010} were written by Motwani and Raghavan. The probabilistic analysis of randomized algorithms aims to provide probabilistic guarantees with respect to the likelihood of these algorithms to perform correctly or efficiently. For example, in the framework of resource allocation and admission control in transactional systems, Almeida et al. express the quality of service as the probability of executing every job within a maximum timeframe, and use the Markov and Chebyshev inequalities to compute an upper bound of this probability \cite{almeida2010}.

Particularly, randomization proved to be a powerful tool for building decentralized algorithms that run in parallel and work with local information. Thus, probabilistic decentralized algorithms were proposed for solving the load balancing problem in large networks \cite{mitzenmacher2001, fu2011}.  Theoretical analysis proved that, by using these algorithms, the expected or the maximum response time is less than a certain threshold with high probability. As for the Chernoff bound, it was applied in the probabilistic analysis of randomized packet routing algorithms for sparse networks and helped proving that certain algorithms are able to route all packets in a maximum number of steps with high probability \cite{mitzenmacher2005, upfal1984}.

A recent application of the Chernoff bound can be found in the decentralized and probabilistic solution to document clustering that was proposed by Papapetrou et al. \cite{papapetrou2011}. In their approach, the correctness probability is the probability of assigning each document to the right cluster and -- similarly to our case -- is set by the user and used to tune the parameters of the algorithm. 

\section{Conclusions}
\label{sec:conclusion}

The decentralized probabilistic auto-scaling (DEPAS) algorithm can be used to deploy elastic service systems that can quickly grow from tens to tens of thousands of computing nodes. However, the potential customer of DEPAS may not be convinced only by the experimental results that are, after all, dependent on the specific experimental scenarios being considered.

Therefore, in this paper, we defined the correctness of DEPAS in a probabilistic manner and modeled DEPAS as a set of Bernoulli trials. Then, the Chernoff-Hoeffding bounds were used to theoretically prove that there is a subset of configurations for which a minimum correctness probability can be guaranteed. Moreover, based on the theoretical results, we designed numerical algorithms for automatically tuning DEPAS so that it can be both correct and accurate. Through a set of experiments we showed that the results given by the Chernoff-based tuning algorithms are closed to the real minimums, which were estimated based on the binomial formula. 

In a future paper, the \textit{Chernoff min $\delta$} algorithm will be used at runtime to dynamically adapt DEPAS, and the expected gain in accuracy will be experimentally checked for a set of realistic workload traces. 

\section{Acknowledgements}
This research has been partially funded by the Romanian National Authority for Scientific Research, CNCS Ð UEFISCDI, under project PN-II-ID-PCE-2011-3-0260 (AMICAS) and by the European Commission, under project FP7-ICT-2009-5-256910 (mOSAIC). Bogdan Caprarescu is partially supported by IBM through a PhD Fellowship Award.

%In computer science scalability is a frequently-claimed attribute that lacks a generally-accepted definition \cite{duboc2007}. In this paper, we use a definition of scalability that is similar to the one adopted by Braatas and Hughes \cite{brataas2004}: \textit{A system is scalable if the capacity increases linearly with the amount of physical resources that are provisioned to the system}.

%% The Appendices part is started with the command \appendix;
%% appendix sections are then done as normal sections
%% \appendix

%% \section{}
%% \label{}

%% References
%%
%% Following citation commands can be used in the body text:
%% Usage of \cite is as follows:
%%   \cite{key}          ==>>  [#]
%%   \cite[chap. 2]{key} ==>>  [#, chap. 2]
%%   \citet{key}         ==>>  Author [#]

%% References with bibTeX database:

\bibliographystyle{model1-num-names}
\bibliography{jcss}

\begin{thebibliography}{19}
\expandafter\ifx\csname natexlab\endcsname\relax\def\natexlab#1{#1}\fi
\providecommand{\bibinfo}[2]{#2}
\ifx\xfnm\relax \def\xfnm[#1]{\unskip,\space#1}\fi
%Type = Article
\bibitem[{Armbrust et~al.(2010)Armbrust, Fox, Griffith, Joseph, Katz,
  Konwinski, Lee, Patterson, Rabkin, Stoica, and Zaharia}]{armbrust2010}
\bibinfo{author}{M.~Armbrust}, \bibinfo{author}{A.~Fox},
  \bibinfo{author}{R.~Griffith}, \bibinfo{author}{A.~D. Joseph},
  \bibinfo{author}{R.~Katz}, \bibinfo{author}{A.~Konwinski},
  \bibinfo{author}{G.~Lee}, \bibinfo{author}{D.~Patterson},
  \bibinfo{author}{A.~Rabkin}, \bibinfo{author}{I.~Stoica},
  \bibinfo{author}{M.~Zaharia},
\newblock \bibinfo{title}{A view of cloud computing},
\newblock \bibinfo{journal}{Commun. ACM} \bibinfo{volume}{53}
  (\bibinfo{year}{2010}) \bibinfo{pages}{50--58}.
%Type = Article
\bibitem[{Almeida et~al.(2010)Almeida, Almeida, Ardagna, Cunha, Francalanci,
  and Trubian}]{almeida2010}
\bibinfo{author}{J.~Almeida}, \bibinfo{author}{V.~Almeida},
  \bibinfo{author}{D.~Ardagna}, \bibinfo{author}{I.~Cunha},
  \bibinfo{author}{C.~Francalanci}, \bibinfo{author}{M.~Trubian},
\newblock \bibinfo{title}{Joint admission control and resource allocation in
  virtualized servers},
\newblock \bibinfo{journal}{J. Parallel Distrib. Comput.} \bibinfo{volume}{70}
  (\bibinfo{year}{2010}) \bibinfo{pages}{344--362}.
%Type = Article
\bibitem[{Ghanbari et~al.(2012)Ghanbari, Simmons, Litoiu, and
  Iszlai}]{ghanbari2012}
\bibinfo{author}{H.~Ghanbari}, \bibinfo{author}{B.~Simmons},
  \bibinfo{author}{M.~Litoiu}, \bibinfo{author}{G.~Iszlai},
\newblock \bibinfo{title}{Feedback-based optimization of a private cloud},
\newblock \bibinfo{journal}{Future Generation Comp. Syst.} \bibinfo{volume}{28}
  (\bibinfo{year}{2012}) \bibinfo{pages}{104--111}.
%Type = Inproceedings
\bibitem[{Sharma et~al.(2011)Sharma, Shenoy, Sahu, and Shaikh}]{sharma2011}
\bibinfo{author}{U.~Sharma}, \bibinfo{author}{P.~Shenoy},
  \bibinfo{author}{S.~Sahu}, \bibinfo{author}{A.~Shaikh},
\newblock \bibinfo{title}{A cost-aware elasticity provisioning system for the
  cloud},
\newblock in: \bibinfo{booktitle}{Proceedings of the 2011 31st International
  Conference on Distributed Computing Systems}, ICDCS '11,
  \bibinfo{publisher}{IEEE Computer Society}, \bibinfo{address}{Washington, DC,
  USA}, \bibinfo{year}{2011}, pp. \bibinfo{pages}{559--570}.
%Type = Inproceedings
\bibitem[{Meng et~al.(2010)Meng, Liu, and Soundararajan}]{meng2010}
\bibinfo{author}{S.~Meng}, \bibinfo{author}{L.~Liu},
  \bibinfo{author}{V.~Soundararajan},
\newblock \bibinfo{title}{Tide: achieving self-scaling in virtualized
  datacenter management middleware},
\newblock in: \bibinfo{booktitle}{Proceedings of the 11th International
  Middleware Conference Industrial track}, Middleware Industrial Track '10,
  \bibinfo{publisher}{ACM}, \bibinfo{address}{New York, NY, USA},
  \bibinfo{year}{2010}, pp. \bibinfo{pages}{17--22}.
%Type = Article
\bibitem[{Lua et~al.(2005)Lua, Crowcroft, Pias, Sharma, and Lim}]{keong2005}
\bibinfo{author}{E.~K. Lua}, \bibinfo{author}{J.~Crowcroft},
  \bibinfo{author}{M.~Pias}, \bibinfo{author}{R.~Sharma},
  \bibinfo{author}{S.~Lim},
\newblock \bibinfo{title}{A survey and comparison of peer-to-peer overlay
  network schemes},
\newblock \bibinfo{journal}{IEEE Communications Surveys and Tutorials}
  \bibinfo{volume}{7} (\bibinfo{year}{2005}) \bibinfo{pages}{72--93}.
%Type = Article
\bibitem[{Calcavecchia et~al.(2012)Calcavecchia, Caprarescu, Di~Nitto, Dubois,
  and Petcu}]{caprarescu2011}
\bibinfo{author}{N.~M. Calcavecchia}, \bibinfo{author}{B.~A. Caprarescu},
  \bibinfo{author}{E.~Di~Nitto}, \bibinfo{author}{D.~J. Dubois},
  \bibinfo{author}{D.~Petcu},
\newblock \bibinfo{title}{Depas: A decentralized probabilistic algorithm for
  auto-scaling},
\newblock \bibinfo{journal}{CoRR} \bibinfo{volume}{arXiv:1202.2509v1}
  (\bibinfo{year}{2012}).
%Type = Article
\bibitem[{Jelasity et~al.(2007)Jelasity, Voulgaris, Guerraoui, Kermarrec, and
  van Steen}]{jelasity2007}
\bibinfo{author}{M.~Jelasity}, \bibinfo{author}{S.~Voulgaris},
  \bibinfo{author}{R.~Guerraoui}, \bibinfo{author}{A.-M. Kermarrec},
  \bibinfo{author}{M.~van Steen},
\newblock \bibinfo{title}{Gossip-based peer sampling},
\newblock \bibinfo{journal}{ACM Trans. Comput. Syst.} \bibinfo{volume}{25}
  (\bibinfo{year}{2007}).
%Type = Inproceedings
\bibitem[{Di~Nitto et~al.(2008)Di~Nitto, Dubois, Mirandola, Saffre, and
  Tateson}]{dinitto2008}
\bibinfo{author}{E.~Di~Nitto}, \bibinfo{author}{D.~J. Dubois},
  \bibinfo{author}{R.~Mirandola}, \bibinfo{author}{F.~Saffre},
  \bibinfo{author}{R.~Tateson},
\newblock \bibinfo{title}{Applying self-aggregation to load balancing:
  experimental results},
\newblock in: \bibinfo{booktitle}{Proceedings of the 3rd International
  Conference on Bio-Inspired Models of Network, Information and Computing
  Sytems}, BIONETICS '08, \bibinfo{publisher}{ICST (Institute for Computer
  Sciences, Social-Informatics and Telecommunications Engineering)},
  \bibinfo{year}{2008}, pp. \bibinfo{pages}{14:1--14:8}.
%Type = Article
\bibitem[{Jelasity et~al.(2005)Jelasity, Montresor, and
  Babaoglu}]{jelasity2005}
\bibinfo{author}{M.~Jelasity}, \bibinfo{author}{A.~Montresor},
  \bibinfo{author}{O.~Babaoglu},
\newblock \bibinfo{title}{Gossip-based aggregation in large dynamic networks},
\newblock \bibinfo{journal}{ACM Trans. Comput. Syst.} \bibinfo{volume}{23}
  (\bibinfo{year}{2005}) \bibinfo{pages}{219--252}.
%Type = Inproceedings
\bibitem[{Montresor and Ghodsi(2009)}]{montresor2009}
\bibinfo{author}{A.~Montresor}, \bibinfo{author}{A.~Ghodsi},
\newblock \bibinfo{title}{Towards robust peer counting},
\newblock in: \bibinfo{booktitle}{in Proc. of the 9th International Conference
  on Peer-to-Peer (P2PÕ09}, \bibinfo{publisher}{IEEE Computer Society},
  \bibinfo{address}{Washington, DC, USA}, \bibinfo{year}{2009}, pp.
  \bibinfo{pages}{143--146}.
%Type = Book
\bibitem[{Mitzenmacher and Upfal(2005)}]{mitzenmacher2005}
\bibinfo{author}{M.~Mitzenmacher}, \bibinfo{author}{E.~Upfal},
  \bibinfo{title}{Probability and Computing: Randomized Algorithms and
  Probabilistic Analysis}, \bibinfo{publisher}{Cambridge University Press},
  \bibinfo{address}{New York, NY, USA}, \bibinfo{year}{2005}.
%Type = Misc
\bibitem[{dep(2012)}]{depas-tuning-tool}
\bibinfo{title}{Depas tuning tool}, \bibinfo{year}{2012}.
  \bibinfo{note}{Http://bogdan.softinvent.org/research/depas/ (accessed Jan 15
  2012)}.
%Type = Article
\bibitem[{Motwani and Raghavan(1996)}]{motwani1996}
\bibinfo{author}{R.~Motwani}, \bibinfo{author}{P.~Raghavan},
\newblock \bibinfo{title}{Randomized algorithms},
\newblock \bibinfo{journal}{ACM Comput. Surv.} \bibinfo{volume}{28}
  (\bibinfo{year}{1996}) \bibinfo{pages}{33--37}.
%Type = Incollection
\bibitem[{Motwani and Raghavan(2010)}]{motwani2010}
\bibinfo{author}{R.~Motwani}, \bibinfo{author}{P.~Raghavan},
\newblock \bibinfo{title}{Algorithms and theory of computation handbook},
\newblock \bibinfo{publisher}{Chapman \& Hall/CRC}, \bibinfo{year}{2010}, pp.
  \bibinfo{pages}{12--12}.
%Type = Article
\bibitem[{Mitzenmacher(2001)}]{mitzenmacher2001}
\bibinfo{author}{M.~Mitzenmacher},
\newblock \bibinfo{title}{The power of two choices in randomized load
  balancing},
\newblock \bibinfo{journal}{IEEE Trans. Parallel Distrib. Syst.}
  \bibinfo{volume}{12} (\bibinfo{year}{2001}) \bibinfo{pages}{1094--1104}.
%Type = Article
\bibitem[{Fu et~al.(2011)Fu, Xu, and Shen}]{fu2011}
\bibinfo{author}{S.~Fu}, \bibinfo{author}{C.-Z. Xu}, \bibinfo{author}{H.~Shen},
\newblock \bibinfo{title}{Randomized load balancing strategies with churn
  resilience in peer-to-peer networks},
\newblock \bibinfo{journal}{J. Netw. Comput. Appl.} \bibinfo{volume}{34}
  (\bibinfo{year}{2011}) \bibinfo{pages}{252--261}.
%Type = Article
\bibitem[{Upfal(1984)}]{upfal1984}
\bibinfo{author}{E.~Upfal},
\newblock \bibinfo{title}{Efficient schemes for parallel communication},
\newblock \bibinfo{journal}{J. ACM} \bibinfo{volume}{31} (\bibinfo{year}{1984})
  \bibinfo{pages}{507--517}.
%Type = Article
\bibitem[{Papapetrou et~al.(2011)Papapetrou, Siberski, and
  Fuhr}]{papapetrou2011}
\bibinfo{author}{O.~Papapetrou}, \bibinfo{author}{W.~Siberski},
  \bibinfo{author}{N.~Fuhr},
\newblock \bibinfo{title}{Decentralized probabilistic text clustering},
\newblock \bibinfo{journal}{IEEE Transactions on Knowledge and Data
  Engineering} \bibinfo{volume}{23} (\bibinfo{year}{2011})
  \bibinfo{pages}{339--342}.

\end{thebibliography}

%% Authors are advised to submit their bibtex database files. They are
%% requested to list a bibtex style file in the manuscript if they do
%% not want to use model1a-num-names.bst.

%% References without bibTeX database:

% \begin{thebibliography}{00}

%% \bibitem must have the following form:
%%   \bibitem{key}...
%%

% \bibitem{}

% \end{thebibliography}

\end{document}